\newtcolorbox{wbox}
{
	colback  = white,
}
\newtheorem{theorem}{Theorem}
\newtheorem{lemma}[theorem]{Lemma}
\newtheorem{corollary}[theorem]{Corollary}
\newtheorem{example}{Example}
\newcommand{\continuation}{??}
\newtheorem*{claim*}{Claim}
\newtheorem*{remark*}{Remark}
\newtheorem*{definition*}{Definition}
\renewcommand\section{%
  \@startsection{section}{1}
                {\z@}%
                {-3.5ex \@plus -1ex \@minus -.2ex}%
                {2.3ex \@plus.2ex}%
                {\large\bfseries}
}
\renewcommand\subsection{%
  \@startsection{subsection}{2}
                {\z@}%
                {-3.25ex\@plus -1ex \@minus -.2ex}%
                {1sp}
                {\normalsize\bfseries}
}
\renewcommand\subsubsection{%
  \@startsection{subsubsection}{3}
                {\z@}%
                {-3.25ex\@plus -1ex \@minus -.2ex}%
                {1sp}
                {\normalfont\normalsize}
}
\title{{\Large\bf On the Population Monotonicity of Independent Set Games}}
\author[1]{Libing Wang}
\author[1]{Han Xiao
\thanks{Corresponding author. Email: hxiao@ouc.edu.cn.}}
\author[2]{Donglei Du}
\author[3]{Dachuan Xu}
\affil[1]{School of Mathematical Sciences, Ocean University of China, Qingdao, China}
\affil[2]{Faculty of Management, University of New Brunswick, Fredericton, New Brunswick, E3B 5A3, Canada}
\affil[3]{Department of Operations Research and Information Engineering, Beijing University of Technology, Beijing, China}
\date{}
\begin{document}


\maketitle

\openup 1.2\jot


\begin{abstract}
An independent set game is a cooperative game dealing with profit sharing in the maximum independent set problem.
A population monotonic allocation scheme is a rule specifying how to share the profit of each coalition among its participants such that every participant is better off when the coalition expands.
In this paper, we provide a necessary and sufficient characterization for independent set games admitting population monotonic allocation schemes.
Moreover, our characterization can be verified efficiently.

\hfill

\noindent\textbf{Keywords:} Cooperative game $\cdot$ Population monotonic allocation scheme $\cdot$ Independent set

\noindent\textbf{Mathematics Subject Classification:}  05C57 $\cdot$ 91A12 $\cdot$ 91A43 $\cdot$ 91A46
\hfill

\hfill
\end{abstract}


\section{Introduction}
An independent set game is a cooperative game with graph structure.
The players are edges and the value of each coalition is defined to be the maximum size of independent sets induced by the coalition.
The following scenario captures the essence of independent set games.
There are projects and participants.
Every participant is suitable for two projects but only allowed to join one.
Every project requires all suitable participants to cooperate to be done.
The aim is to finish as many projects as possible.
We may introduce a graph to represent projects, participants and their relations, where every vertex is a project and every edge is a participant joining two suitable projects.
Then the aim becomes finding a maximum independent set.
And the independent set game deals with profit sharing in this scenario.

The core is a central concept in profit sharing, which contains all possible ways of distributing the total worth of a game among individual participants such that no participant derives a better payoff by leaving either individually or as a group.
Thus the core emphasizes the stability of the game.
However, the core does not necessarily guarantee the unhindered formation of a coalition, as a new member may decrease the profit shared by participants in the current coalition.
To retain stability in an expanding coalition, population monotonic allocation schemes (PMASes for short) were introduced \cite{Spru90}, under which no participant derives a worse payoff when a new member joins the coalition.
Thus, PMASes can be viewed as a refinement of the core, where the latter emphasizes the static stability and the former emphasizes the dynamic stability.

Combinatorial optimization games take an important part in cooperative game theory and draw a lot of attention from researchers.
Faigle and Kern \cite{FK93} studied the approximate core.
Deng et al. \cite{DIN99} provided a unified characterization for the core non-emptiness.
Universal characterizations for the core non-emptiness of each subgame were studied in \cite{DINZ00, ELN12, DLN13}.
Immorlica et al. \cite{IMM08} studied approximate PMASes.
Research efforts have also been made for individual combinational optimization games, including matching games \cite{SS71, KP03, KPT20, Vazi22, XF22}, flow games \cite{KZ82, DFS06}, spanning tree games \cite{NMT04, KS20}, etc.
For independent set games, core non-emptiness conditions for the game itself and every subgame were successfully settled in \cite{DIN99, ELN12, DLN13}.
In addition, Xiao et al. \cite{XWF21} provided a necessary and sufficient characterization for convexity and Liu et al. \cite{LXF21} gave a combinatorial characterization for PMASes in convex instances.
By investigating the interplay of PMASes and underlying structures, we provide a necessary and sufficient characterization for PMASes in independent set games.
Our characterization identifies a larger class of independent set games than the work of Xiao et al. \cite{XWF21},
since every convex game is also population monotonic \cite{Spru90} but the converse is not necessarily true.
Moreover, our characterization can be verified efficiently.

The remainder of this paper is organized as follows.
In Section \ref{sec:preliminaries}, some notions and notations used in this paper are introduced.
Section \ref{sec:PM_ISGame} develops a necessary and sufficient characterization for PMASes of independent set games.
Section \ref{sec:discussion} compares the characterizations for convexity and population monotonicity of independent set games and discusses the direction of further research.

\section{Preliminaries}
\label{sec:preliminaries}
\subsection{Game theory}
A \emph{cooperative game} $\Gamma=(N,\gamma)$ consists of a \emph{player set} $N$ and a \emph{characteristic function} $\gamma:2^N\rightarrow \mathbb{R}$ with convention $\gamma(\emptyset)=0$.
We call $N$ the \emph{grand coalition} and call $S$ a \emph{coalition} for any $S\subseteq N$.
The game $\Gamma$ is \emph{convex} if $\gamma (S)+\gamma (T)\leq \gamma (S\cap T) + \gamma (S\cup T)$ for any $S,T\subseteq N$.

An \emph{allocation} of the game $\Gamma$ is a non-negative vector $\boldsymbol{x}=(x_i)_{i\in N}$ specifying how to distribute the profit among players in the grand coalition $N$.
The \emph{core} of the game $\Gamma$ is the set of allocations $\boldsymbol{x}=(x_i)_{i\in N}$ satisfying \emph{efficiency} and \emph{coalitional rationality} conditions,
\begin{enumerate}
  \item[\textendash] \emph{efficiency}: $\sum_{i\in N} x_{i}=\gamma(N)$;
  \item[\textendash] \emph{coalitional rationality}: $\sum_{i\in S}x_i
\geq \gamma(S)$ for any $S\subseteq N$.
\end{enumerate}

An \emph{allocation scheme} of the game $\Gamma$ is a collection of non-negative vectors $\{\boldsymbol{x}_S\}_{S\in 2^N \backslash \{\emptyset\}}$ with $\boldsymbol{x}_S=(x_{S,i})_{i\in S}$ specifying how to distribute the profit among players in every coalition $S\in 2^N\backslash \{\emptyset\}$.
A \emph{population monotonic allocation scheme} (PMAS) is an allocation scheme $\{\boldsymbol{x}_S\}_{S\in 2^N \backslash \{\emptyset\}}$ satisfying \emph{efficiency} and \emph{monotonicity} conditions,
\begin{enumerate}
  \item[\textendash] \emph{efficiency}: $\sum_{i\in S} x_{S,i}=\gamma(S)$ for any $S\in 2^N\backslash \{\emptyset\}$;
  \item[\textendash] \emph{monotonicity}: $x_{S,i}\leq x_{T,i}$ for any $S, T\in 2^N\backslash \{\emptyset\}$ with $S\subseteq T$ and any $i\in S$.
\end{enumerate}
The game $\Gamma$ is \emph{population monotonic} if it admits a PMAS.
Sprumont \cite{Spru90} proved that every convex game is population monotonic by providing an incremental procedure for constructing a PMAS from any convex game. 

\subsection{Graph theory}
All graphs considered in this paper are finite, undirected and simple.
Let $G=(V,E)$ be a graph.
For any vertex $v$ in $G$, we use $\delta(v)$ to denote the set of edges incident to $v$ and use $N(v)$ to denote the set of vertices adjacent to $v$.
For any two vertices $u$ and $v$ in $G$, we use $N(u,v)$ to denote the set of vertices that are only adjacent to $u$ and $v$, i.e., $N(u,v)=\big\{w\in V: N(w)=\{u,v\}\big\}$.
A vertex is \emph{isolated} if it has degree zero.
A vertex is \emph{pendant} if it has degree one.
An edge is \emph{pendant} if it has a pendant endpoint.
A vertex \emph{covers} all the edges incident to it.
For $U\subseteq V$, $G[U]$ denotes the induced subgraph of $G$ on $U$.
For $F\subseteq E$,
$V\langle F\rangle $ denotes the set of vertices incident \emph{only} to edges in $F$.
An \emph{independent set} of $G$ is a vertex set $U\subseteq V$ such that $G[U]$ has no edge.
We use $\alpha(G)$ to denote the size of maximum independent sets in $G$.

\section{Population monotonicity of independent set games}
\label{sec:PM_ISGame}
The \emph{independent set game} on a graph $G=(V,E)$ is a cooperative game $\Gamma_G=(N,\gamma)$ such that $N=E$ and $\gamma(S)=\alpha(G[V\langle S\rangle])$ for any $S\subseteq N$.
Throughout this paper, we always assume that the underlying graph $G$ has no isolated vertex.
In the following, we develop a complete characterization for the population monotonicity of $\Gamma_G$ via the structure of $G$.

\subsection{A glimpse of PMASes}
Notice that any independent set naturally yields an edge set decomposition.
Consider the underlying graph $G=(V,E)$ of the independent set game $\Gamma_G=(N,\gamma)$.
Let $I\subseteq V$ be an independent set and $\bar{E}(I)\subseteq E$ be the set of edges not covered by $I$.
Then $\{\delta (v):v\in I\}$ yields a partition for $E\backslash \bar{E}(I)$ since $\delta(u)\cap \delta(v)=\emptyset$ for any two vertices $u,v\in I$.
The edge set decomposition induced by independent sets offers a basic observation for PMASes in independent set games.

\begin{lemma}
\label{thm:decomposition}
Let $\{\boldsymbol{x}_S\}_{S\in 2^N\backslash \{\emptyset\}}$ be a PMAS of $\Gamma_G$.
Let $I$ be a maximum independent set of $G$.
Then $\sum_{i\in \delta (v)} x_{N,i} =\gamma\big(\delta (v)\big)=1$ for any $v\in I$.
\end{lemma}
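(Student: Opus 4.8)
The plan is to combine two ingredients: a structural computation establishing $\gamma(\delta(v))=1$, and a global double-counting argument that pins the local sum $\sum_{i\in\delta(v)}x_{N,i}$ down to exactly this value. First I would identify the subgraph $G[V\langle\delta(v)\rangle]$. Since every edge of $\delta(v)$ is incident to $v$, a vertex $w\neq v$ lies in $V\langle\delta(v)\rangle$ precisely when all of its incident edges are incident to $v$, i.e. when $w$ is a pendant vertex with $N(w)=\{v\}$. Hence $V\langle\delta(v)\rangle$ consists of $v$ together with its pendant neighbours, and $G[V\langle\delta(v)\rangle]$ is a star centred at $v$.

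The key observation is that, because $I$ is a \emph{maximum} independent set, a vertex $v\in I$ can have at most one pendant neighbour. Indeed, if $v$ had two distinct pendant neighbours $w_1,w_2$, then both would be absent from $I$ (each is adjacent to $v\in I$), and replacing $v$ by $w_1,w_2$ would yield an independent set of size $|I|+1$, contradicting maximality. Therefore $G[V\langle\delta(v)\rangle]$ is either a single vertex or a single edge, so in both cases $\alpha\big(G[V\langle\delta(v)\rangle]\big)=1$, giving $\gamma(\delta(v))=1$.

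For the allocation identity I would first establish the lower bound $\sum_{i\in\delta(v)}x_{N,i}\geq 1$ for every $v\in I$. This is immediate from the PMAS axioms: monotonicity gives $x_{\delta(v),i}\leq x_{N,i}$ for each $i\in\delta(v)$ since $\delta(v)\subseteq N$, while efficiency on the coalition $\delta(v)$ gives $\sum_{i\in\delta(v)}x_{\delta(v),i}=\gamma(\delta(v))=1$, so $\sum_{i\in\delta(v)}x_{N,i}\geq 1$. To force equality I would use efficiency on the grand coalition together with the edge decomposition recorded just before the statement. Since $V\langle N\rangle=V$, we have $\sum_{i\in N}x_{N,i}=\gamma(N)=\alpha(G)=|I|$, and $N$ decomposes as the disjoint union $\big(\bigcup_{v\in I}\delta(v)\big)\cup\bar{E}(I)$. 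Using non-negativity of the allocation on $\bar{E}(I)$ and the lower bound just proved,
\[
|I|=\sum_{i\in N}x_{N,i}\geq\sum_{v\in I}\sum_{i\in\delta(v)}x_{N,i}\geq\sum_{v\in I}1=|I|,
\]
so every inequality is tight. In particular $\sum_{i\in\delta(v)}x_{N,i}=1$ for each $v\in I$, which is the desired conclusion.

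The two PMAS-axiom manipulations are routine; the one genuinely structural point, and the step I would be most careful about, is the swap argument bounding the number of pendant neighbours of $v\in I$. This is exactly where the maximality of $I$ is used, and it is what makes the clean value $\gamma(\delta(v))=1$ correct rather than the larger star size that an arbitrary independent set could produce.
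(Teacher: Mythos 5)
Your proof is correct and follows essentially the same route as the paper's: decompose $N$ into $\bar{E}(I)$ and the disjoint stars $\{\delta(v)\}_{v\in I}$, bound each local sum below by $1$ via monotonicity and efficiency on the coalition $\delta(v)$, and squeeze against global efficiency $\gamma(N)=\lvert I\rvert$ to force equality. The one difference is in your favor: the paper simply asserts $\gamma\big(\delta(v)\big)=1$ for $v\in I$, whereas you justify it by identifying $G[V\langle\delta(v)\rangle]$ as a star and using the swap argument (maximality of $I$ forbids two pendant neighbours of $v$) to see that this star is a single vertex or a single edge.
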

\begin{proof}
Let $\bar{E}(I)$ be the set of edges not covered by $I$.
By definition of PMASes, we have
\begin{equation}
  \label{eq:decomposition}
  \begin{split}
  \gamma(N)
  &= \sum_{i\in \bar{E}(I)}x_{N,i}+ \sum_{v\in I}\sum_{i\in \delta(v)}x_{N,i}\\
  &\geq \sum_{v\in I} \sum_{i\in \delta (v)}x_{\delta (v),i}\\
  &=\sum_{v\in I}\gamma \big(\delta(v)\big)\\
  &=\lvert I\rvert.
  \end{split}
\end{equation}
The first equality in \eqref{eq:decomposition} is from the edge set decomposition induced by $I$.
The last equality in \eqref{eq:decomposition} is because $\gamma \big(\delta (v)\big)= 1$ for any $v\in I$.
It follows that
\begin{equation*}\label{eq:incident.edges}
\sum_{i\in \delta (v)} x_{N,i} =\gamma\big(\delta (v)\big)=1
\end{equation*}
for any $v\in I$.
\end{proof}

Lemma \ref{thm:decomposition} implies the following corollary directly.

\begin{lemma}
\label{thm:edge-exposed}
Let $\{\boldsymbol{x}_S\}_{S\in 2^N\backslash \{\emptyset\}}$ be a PMAS of $\Gamma_G$.
Then $0\leq x_{S,i}\leq 1$ for any $S\subseteq N$ with $i\in S$.
Moreover, if edge $i$ is not covered by some maximum independent set of $G$, then $x_{S,i}=0$ for any $S\subseteq N$ with $i \in S$.
\end{lemma}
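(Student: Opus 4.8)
The plan is to derive both conclusions from Lemma \ref{thm:decomposition} together with the efficiency and monotonicity axioms of a PMAS. The organizing idea is to fix a single maximum independent set $I$ of $G$ and exploit the partition of the player set it induces, namely $N=\bar{E}(I)\cup\bigcup_{v\in I}\delta(v)$, where $\bar{E}(I)$ collects the edges $I$ fails to cover and the $\delta(v)$ are the stars at its vertices. This is a genuine disjoint union: the stars are pairwise disjoint because $\delta(u)\cap\delta(v)\neq\emptyset$ would force the edge $uv\in E$, contradicting the independence of $I$, and $\bar{E}(I)$ is precisely what remains. Since $G$ has no isolated vertex, $V\langle N\rangle=V$, so that $\gamma(N)=\alpha(G)=\lvert I\rvert$, the value that efficiency actually matches.

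For the ``moreover'' clause I would first show that $x_{N,i}=0$ for every edge $i\in\bar{E}(I)$. Writing the efficiency identity $\sum_{i\in N}x_{N,i}=\gamma(N)=\lvert I\rvert$ along the above partition and substituting the star identities $\sum_{i\in\delta(v)}x_{N,i}=1$ supplied by Lemma \ref{thm:decomposition} (summed over $v\in I$) leaves $\sum_{i\in\bar{E}(I)}x_{N,i}=0$; non-negativity then forces each summand to vanish. An edge $i$ that is not covered by some maximum independent set $I$ lies in $\bar{E}(I)$ for that $I$, whence $x_{N,i}=0$, and monotonicity $x_{S,i}\le x_{N,i}$ combined with non-negativity propagates $x_{S,i}=0$ to every $S\subseteq N$ with $i\in S$.

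For the bounds $0\le x_{S,i}\le 1$, the lower bound is immediate from the non-negativity built into the definition of an allocation scheme. For the upper bound I would use monotonicity to reduce to proving $x_{N,i}\le 1$, then classify $i$ by the same partition: if $i\in\delta(v)$ for some $v\in I$, then $x_{N,i}\le\sum_{j\in\delta(v)}x_{N,j}=1$ by non-negativity and Lemma \ref{thm:decomposition}; otherwise $i\in\bar{E}(I)$ and $x_{N,i}=0\le 1$ by the previous paragraph. In either case $x_{N,i}\le 1$, and hence $x_{S,i}\le 1$ for all $S\ni i$.

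I do not expect a real obstacle here, as the argument is essentially bookkeeping once Lemma \ref{thm:decomposition} is available. The only point demanding care is the reading of the hypothesis: ``not covered by some maximum independent set'' must be interpreted as ``there exists a maximum independent set avoiding $i$,'' which is exactly the condition placing $i$ in $\bar{E}(I)$ for that $I$ and thereby triggering the vanishing established above.
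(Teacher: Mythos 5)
Your proposal is correct and follows essentially the same route as the paper's own proof: both fix a maximum independent set $I$, use the induced partition $N=\bar{E}(I)\cup\bigcup_{v\in I}\delta(v)$ together with Lemma \ref{thm:decomposition} to force $\sum_{i\in\bar{E}(I)}x_{N,i}=0$ (hence each term vanishes by non-negativity), bound $x_{N,i}\le 1$ via the star sums, and finish with monotonicity. Your added care about the disjointness of the stars and the case split for the upper bound only makes explicit what the paper leaves implicit.
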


\begin{proof}
Let $I$ be a maximum independent set.
Let $v$ be a vertex in $I$.
Lemma \ref{thm:decomposition} implies that $0\leq x_{N,i}\leq 1$ for any $i\in \delta (v)$.
Let $\bar{E}(I)$ be the set of edges not covered by $I$.
Lemma \ref{thm:decomposition} also implies that $x_{N,i}=0$ for any $i\in \bar{E}(I)$ as
\begin{equation*}
  \begin{split}
  \sum_{i\in \bar{E}(I)} x_{N,i}
  &= \gamma (N)-\sum_{v\in I} \sum_{i\in \delta (v)} x_{N,i}\\
  &=\gamma (N) - \lvert I\rvert\\
  &=0.
  \end{split}
\end{equation*}
By monotonicity of PMASes, $0\leq x_{S,i}\leq x_{N,i}\leq 1$ for any $S\subseteq N$ with $i\in S$.
In particular, if $i\in \bar{E}(I)$, then $x_{S,i}=x_{N,i}=0$ for any $S\subseteq N$ with $i\in S$.
Consequently, if edge $i$ is not covered by some maximum independent set of $G$, then $x_{S,i}=0$ for any $S\subseteq N$ with $i\in S$.
\end{proof}

Lemma \ref{thm:edge-exposed} suggests that an edge is a null player in the independent set game if it is not covered by some maximum independent set.
Indeed, the game has the same value with or without the edge that is not covered by some maximum independent set.
Consequently, any core allocation defined from a PMAS yields a null payoff to a null player.
In contrast, every pendant edge is covered by all maximum independent sets, which suggests that pendant edges are essential in independent set games.
Next, we study pendant edges in PMASes of independent set games.

\subsection{Pendant edges in PMASes}

We have the following observation for pendant edges in PMASes.

\begin{lemma}
\label{thm:edge-pendant}
Let $\{\boldsymbol{x}_S\}_{S\in 2^N\backslash \{\emptyset\}}$ be a PMAS of $\Gamma_G$.
If edge $i$ is pendant in $G$, then $x_{S,i}=1$ for any $S\subseteq N$ with $i\in S$.
\end{lemma}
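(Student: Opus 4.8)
The plan is to pin down $x_{S,i}$ from both sides: Lemma~\ref{thm:edge-exposed} already supplies the upper bound $x_{S,i}\le 1$, so it suffices to prove the matching lower bound $x_{S,i}\ge 1$ for every $S\subseteq N$ with $i\in S$. By monotonicity the smallest such payoff is attained at the singleton coalition $\{i\}$, since $\{i\}\subseteq S$ forces $x_{\{i\},i}\le x_{S,i}$. Hence the entire statement reduces to showing that $x_{\{i\},i}=1$, and by efficiency on the coalition $\{i\}$ this is the same as computing $\gamma(\{i\})$.

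The heart of the argument is therefore the evaluation $\gamma(\{i\})=1$, and this is precisely where the pendant hypothesis enters. Write $i=uv$ and suppose $u$ is a pendant endpoint, so that $\delta(u)=\{i\}$. By definition $V\langle\{i\}\rangle$ collects the vertices whose incident edges all lie in $\{i\}$; thus $u\in V\langle\{i\}\rangle$, whereas a non-pendant endpoint $v$ (i.e. $\deg(v)\ge 2$) is excluded because it carries an edge outside $\{i\}$. In every case $V\langle\{i\}\rangle$ equals $\{u\}$ or $\{u,v\}$ (the latter exactly when $i$ is an isolated edge), so $G[V\langle\{i\}\rangle]$ is a single vertex or a single edge, giving $\alpha\big(G[V\langle\{i\}\rangle]\big)=1$. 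I would note in passing that $\gamma(\{i\})=1$ characterizes pendant edges: if both endpoints had degree at least two, then $V\langle\{i\}\rangle=\emptyset$ and $\gamma(\{i\})=0$, so the pendancy is genuinely needed rather than incidental.

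Putting the pieces together, efficiency yields $x_{\{i\},i}=\gamma(\{i\})=1$, monotonicity propagates this upward to $x_{S,i}\ge x_{\{i\},i}=1$ for every $S\ni i$, and Lemma~\ref{thm:edge-exposed} contributes $x_{S,i}\le 1$, forcing $x_{S,i}=1$. There is no analytic obstacle here; the only step demanding care is the bookkeeping of $V\langle\{i\}\rangle$ in the definition of $\gamma$, where one must correctly separate the pendant endpoint from a possibly high-degree neighbor. Everything else is a clean sandwich between the efficiency value at the singleton coalition and the universal upper bound of Lemma~\ref{thm:edge-exposed}.
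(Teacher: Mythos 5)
Your proof is correct and takes essentially the same route as the paper: both arguments pin down $x_{\{i\},i}=\gamma\big(\delta(v)\big)=1$ via efficiency on the singleton coalition $\delta(v)=\{i\}$, propagate the lower bound upward by monotonicity, and cap everything by the upper bound $x_{S,i}\leq 1$ from Lemma \ref{thm:edge-exposed}. The only cosmetic differences are that the paper additionally exhibits a maximum independent set containing the pendant endpoint (a step not actually needed for the sandwich argument), whereas you instead spell out the computation of $\gamma(\{i\})$ from the definition of $V\langle\{i\}\rangle$, correctly handling the isolated-edge case.
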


\begin{proof}
Let $v$ be a pendant endpoint of edge $i$ in $G$.
There exists a maximum independent set $I$ of $G$ with $v\in I$.
To see this, consider any maximum independent set $I$.
If $v\not\in I$, the other endpoint $u$ of edge $i$ is in $I$.
It follows that $I-u+v$ is a desired maximum independent set of $G$.
By efficiency of PMASes, we have $\gamma \big(\delta (v)\big)=x_{i,i}=1$.
Lemma \ref{thm:edge-exposed} and monotonicity of PMASes imply that $x_{S,i}=1$ for any $S\subseteq N$ with $i\in S$.
\end{proof}

Lemma \ref{thm:edge-pendant} shows that pendant edges are important in the independent set game.
Indeed, every maximum independent set covers all pendant edges and the game value decreases when pendant edges are removed. 
It turns out that, for population monotonic independent set games, not only pendant edges are necessary, but also every non-pendant edge has to be close to pendant edges.

\begin{lemma}
\label{thm:necessity-1}
If $\Gamma_G$ is population monotonic, then every vertex has distance at most two to a pendant vertex in $G$.
\end{lemma}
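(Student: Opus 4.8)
The plan is to argue by contraposition: I assume that $\Gamma_G$ admits a PMAS $\{\boldsymbol{x}_S\}_{S\in 2^N\setminus\{\emptyset\}}$ while some vertex $v$ has distance at least three to every pendant vertex, and I derive a contradiction. The distance hypothesis is used purely structurally: it guarantees that $v$, every neighbor of $v$, and every vertex at distance two from $v$ has degree at least two, so none of these vertices has a pendant neighbor. Consequently $V\langle\delta(w)\rangle=\{w\}$ and $\gamma(\delta(w))=1$ for $w=v$ and for each neighbor $w$ of $v$. The pivotal coalition is $C=\delta(v)$: since $\gamma(C)=1$, efficiency forces $\sum_{e\in\delta(v)}x_{C,e}=1$, so it suffices to show that $x_{C,e}=0$ for \emph{every} $e\in\delta(v)$, which is absurd.

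To kill the edges at $v$ I would exploit that the restriction of a PMAS to the subsets of a coalition $S$ is again a PMAS of the induced subgame, so the decomposition in the proof of Lemma~\ref{thm:decomposition} carries over inside $G[V\langle S\rangle]$: for any maximum independent set $I$ of $G[V\langle S\rangle]$ (whose vertices have no pendant neighbor) one gets $\sum_{i\in\delta(w)}x_{S,i}=1$ for each $w\in I$, together with the subgame form of Lemma~\ref{thm:edge-exposed}, namely $x_{S,i}=0$ for every edge $i\in S$ not covered by $I$. The workhorse is then, for a neighbor $u$ of $v$, the coalition $A_u=\delta(u)\cup\delta(v)$, for which $V\langle A_u\rangle=\{u,v\}\cup N(u,v)$. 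I split on $N(u,v)$. If $N(u,v)=\emptyset$, then $G[V\langle A_u\rangle]$ is the single edge $uv$ and both endpoints are maximum independent sets; taking $I=\{u\}$ leaves every edge of $\delta(v)\setminus\{uv\}$ uncovered, so those edges vanish in $A_u$. If $N(u,v)\neq\emptyset$, then the independent set $N(u,v)$ (or its unique vertex) is maximum in $G[V\langle A_u\rangle]$ and avoids both $u$ and $v$, so the edge $uv$ is uncovered and vanishes in $A_u$. In either case monotonicity against $C=\delta(v)\subseteq A_u$ transfers the vanishing down to $C$.

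Collecting these facts, let $P=\{u\in N(v):N(u,v)=\emptyset\}$. A neighbor $u\in P$ forces $x_{C,e}=0$ for all $e\in\delta(v)\setminus\{uv\}$, while a neighbor $u\notin P$ forces $x_{C,uv}=0$. Hence if $P=\emptyset$ every edge $uv$ is zeroed by its own endpoint, and if $|P|\geq 2$ two distinct members of $P$ between them zero all of $\delta(v)$; either way $\sum_{e\in\delta(v)}x_{C,e}=0$, the desired contradiction. The one delicate configuration—and the step I expect to be the main obstacle—is $|P|=1$, say $P=\{u^*\}$, where every edge of $\delta(v)$ except $e^*=u^*v$ is killed but $e^*$ is instead forced to value $1$ (and, by monotonicity, so is $x_{N,e^*}$). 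Here monotonicity runs the wrong way: no coalition containing $\delta(u^*)\cup\delta(v)$ can lower $e^*$. To finish I would use the full strength of the distance hypothesis, reaching a non-pendant vertex $z$ at distance two from $v$ (a neighbor of $u^*$ other than $v$, which exists because $N(u^*,v)=\emptyset$ and $\deg(u^*)\geq 2$) together with the private common neighbors supplied by the neighbors in $N(v)\setminus P$, to build a coalition $C'\supseteq\delta(v)$ whose induced graph $G[V\langle C'\rangle]$ admits a maximum independent set avoiding both $u^*$ and $v$; this yields $x_{C',e^*}=0$, and since $C\subseteq C'$ monotonicity gives $1=x_{C,e^*}\leq x_{C',e^*}=0$, the final contradiction. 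Verifying that such a $C'$ has the required maximum independent set—balancing the gain from the distance-two vertices against the loss of $v$—is the technical heart of the argument.
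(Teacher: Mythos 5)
Your cases $P=\emptyset$ and $\lvert P\rvert\geq 2$ are correct, and they take a genuinely different route from the paper: the subgame version of Lemmas \ref{thm:decomposition} and \ref{thm:edge-exposed} that you invoke is legitimate (the decomposition argument goes through for a coalition $S$ and a maximum independent set $I_S$ of $G[V\langle S\rangle]$ provided $\gamma(\delta(w))=1$ for every $w\in I_S$, and in your applications the vertices of $I_S$ are at distance at most one from $v$, hence have no pendant neighbors). But the case $\lvert P\rvert=1$ is a genuine gap, not a deferrable verification: you never construct the coalition $C'$, and the recipe you sketch is flawed in two places. First, a neighbor $z$ of $u^*$ other than $v$ need \emph{not} be at distance two from $v$: $N(u^*,v)=\emptyset$ only forbids vertices whose neighborhood is exactly $\{u^*,v\}$, so $z$ may be another neighbor of $v$ (e.g.\ when $v,u^*,z$ form a triangle). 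Second, your opening claim that vertices at distance two from $v$ have no pendant neighbors is false --- the hypothesis only pushes pendant vertices to distance at least three --- so a distance-two vertex can have two pendant neighbors, in which case $\gamma(\delta(z))=2$ and the side condition of your workhorse lemma fails for exactly the coalitions you propose to build.

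To see that real work remains, note what $\lvert P\rvert=1$ forces: if some neighbor $u\neq u^*$ had two private common neighbors $w_1,w_2\in N(u,v)$, then $N(w_1,v)=N(w_2,v)=\emptyset$ would put at least two vertices in $P$; so every $u\neq u^*$ has exactly one, and the surviving configurations are triangles through $v$, e.g.\ $N(v)=\{u^*,u_1\}$, $N(u^*)=\{v,u_1\}$, $\deg(u_1)\geq 3$. There your recipe yields nothing: the only available private common neighbor is $u^*$ itself, and any $C'\supseteq\delta(v)$ whose induced graph admits a maximum independent set avoiding both $u^*$ and $v$ must contain all of $\delta(u_1)$ and use $I=\{u_1\}$ (otherwise $v$ is isolated in $G[V\langle C'\rangle]$, or $\{u^*,v\}$ is an isolated edge, and every maximum independent set hits $\{u^*,v\}$). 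The case \emph{can} be closed inside your framework --- e.g.\ $C'=\delta(v)\cup\delta(u_1)$ with $I=\{u_1\}$ gives $x_{C',u^*v}=0$ against $x_{C',u^*v}\geq x_{\delta(v),u^*v}=1$ --- but this requires the case analysis you omitted. The paper avoids all of this by pushing values up instead of down: from $\gamma\big(\delta(u)\cup\delta(v^*)\big)+x_{\delta(v^*),uv^*}\geq\gamma\big(\delta(u)\big)+\gamma\big(\delta(v^*)\big)$ it gets $x_{\delta(v^*),uv^*}\geq 1$ for \emph{every} neighbor $u$ with $\lvert N(u,v^*)\rvert\leq 1$ (triangles included), so efficiency permits at most one such neighbor, while the private common neighbors of any neighbor $u^{**}$ with $\lvert N(u^{**},v^*)\rvert\geq 2$ supply at least two; that single counting argument disposes of your $\lvert P\rvert=1$ case without constructing any auxiliary coalition.
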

\begin{proof}
Assume to the contrary that vertex $v^*$ has distance larger than two to pendant vertices in $G$.

We first show that there is at most one neighbor $u\in N (v^*)$ with $\lvert N (u, v^*)\rvert \leq 1$.
Let $u\in N (v^*)$ be a vertex with $\lvert N(u,v^*)\rvert\leq 1$.
Let $\{\boldsymbol{x}_S\}_{S\in 2^N\backslash \{\emptyset\}}$ be a PMAS of $\Gamma_G$.
By definition of PMASes, we have
\begin{equation}
\label{eq:key-inequality}
\gamma \big(\delta(u)\cup \delta(v^*)\big) + x_{\delta(v^*),u v^*}\geq \gamma \big(\delta(u)\big) + \gamma \big(\delta(v^*)\big).
\end{equation}
Since $\delta (u)\cup \delta (v^*)$ contains no pendant edge, $\gamma\big(\delta(u)\big)=\gamma \big(\delta(v^*)\big)=1$.
Moreover, $\lvert N (u, v^*)\rvert\leq 1$ implies $\gamma \big(\delta(u)\cup \delta(v^*)\big)=1$ as $V\langle \delta(u)\cup \delta(v^*) \rangle$ induces either an edge or a triangle.
Then $x_{\delta(v^*),uv^*} \geq 1$ follows from \eqref{eq:key-inequality}.
Since $\gamma\big(\delta (v^*)\big)=1$, efficiency of PMASes implies that there is at most one vertex $u\in N (v^*)$ with $\lvert N (u, v^*)\rvert \leq 1$.

Notice that $\lvert N(v^*)\rvert\geq 2$, as vertex $v^*$ is not pendant.
Since at most one neighbor $u\in N(v^*)$ satisfies $\lvert N (u, v^*)\rvert \leq 1$,
there is a neighbor $u^* \in N (v^*)$ with $\lvert N (u^*,v^*)\rvert \geq 2$.
Notice that $\lvert N(w, v^*)\rvert \leq 1$ for any $w\in N (u^*,v^*)$, implying that there is more than one neighbor $u\in N (v^*)$ with $\lvert N (u, v^*)\rvert \leq 1$.
A contradiction occurs.
Hence, every vertex has distance at most two to a pendant vertex in $G$.
\end{proof}

\begin{figure}[h]
\centering
\hspace{0em}\includegraphics[width=.4\textwidth]{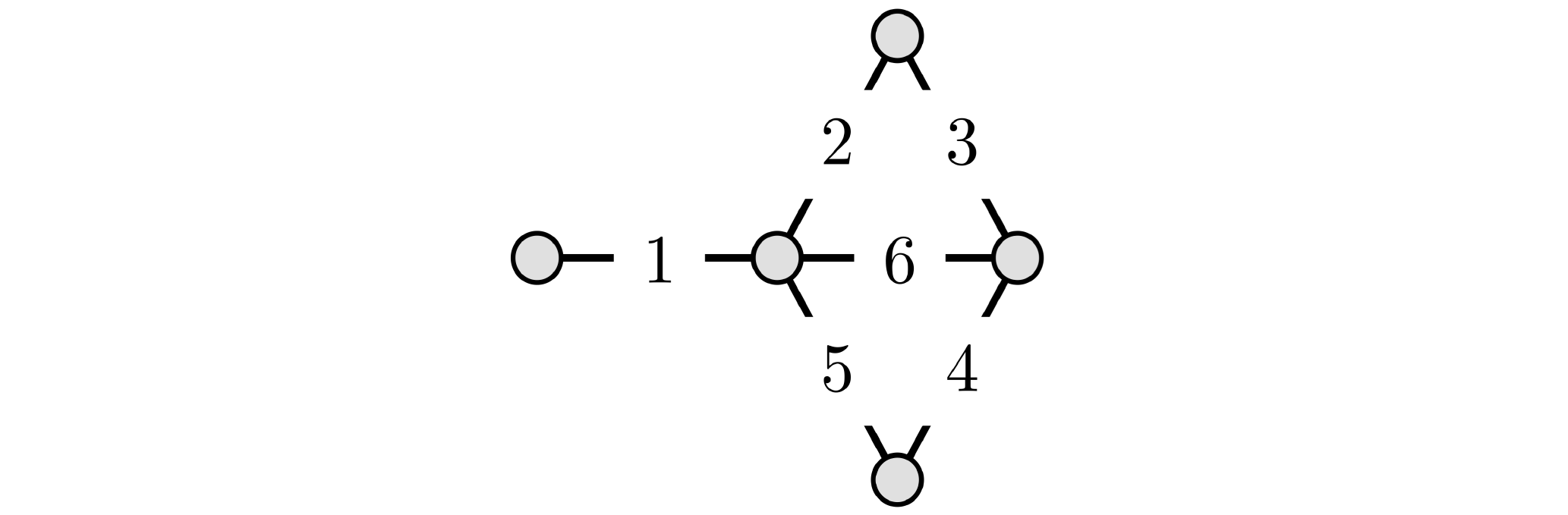}
\caption{A minimal counterexample for the converse of Lemma \ref{thm:necessity-1}.}
\label{fig:CounterExample1}
\end{figure}

\begin{example}
\label{ex:1}
Let $\Gamma=(N,\gamma)$ be the independent set game defined on the graph in Figure \ref{fig:CounterExample1}.
Clearly, every vertex in the underlying graph has distance at most two to a pendant vertex.
However, $\Gamma$ is not population monotonic.
Assume to the contrary that $\{\boldsymbol{x}_S\}_{S\in 2^{N} \backslash \{\emptyset\}}$ is a PMAS of $\Gamma$.
Notice that
\begin{equation*}
\gamma(23)=\gamma(2346)=\gamma(45)=\gamma(3456)=1.
\end{equation*}
By definition of PMASes, we have
\begin{equation*}
0\leq x_{2346,4}+x_{2346,6}\leq \gamma (2346)-\gamma (23)=0,
\end{equation*}
and
\begin{equation*}
0\leq x_{3456,3}+x_{3456,6}\leq \gamma (3456)-\gamma (45)=0.
\end{equation*}
By monotonicity of PMASes, we have $x_{346,3}=x_{3456,3}=0$, $x_{346,4}=x_{2346,4}=0$ and $x_{346,6}=x_{2346,6}=x_{3456,6}=0$.
It follows that
\begin{equation*}
0=x_{346,3}+x_{346,4}+x_{346,6}<\gamma (346)=1,
\end{equation*}
which contradicts efficiency of PMASes.
$\lhd$
\end{example}

Example \ref{ex:1} shows that Lemma \ref{thm:necessity-1} is not sufficient for the population monotonicity of independent set games.
It also suggests that the obstruction might lie in non-pendant edges.
Next, we turn to study non-pendant edges in PMASes of independent set games.

\subsection{Non-pendant edges in PMASes}
Before proceeding, we decompose non-pendant edges into two categories according to whether they are incident to pendant edges.
A non-pendant edge is said of \emph{type I} if it is not incident to any pendant edge, and of \emph{type II} if it is incident to a pendant edge.
Clearly, every non-pendant edge is either of type I or of type II, but not both.
We have the following observations for non-pendant edges in PMASes.

\begin{lemma}
\label{thm:NonPendantEdgeI}
Let $\{\boldsymbol{x}_S\}_{S\in 2^N\backslash \{\emptyset\}}$ be a PMAS of $\Gamma_G$. 
If $i$ is a non-pendant edge of type I with endpoints $u$ and $v$, then $x_{S,i}=1$ for any $S\subseteq N$ with $\delta (u)\subseteq S$ or $\delta (v)\subseteq S$.
\end{lemma}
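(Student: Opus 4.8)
The plan is to reduce the statement to the two smallest coalitions $\delta(u)$ and $\delta(v)$ and then pin the entire efficient payoff of each onto $i$. First I would observe that, by Lemma \ref{thm:edge-exposed} and the monotonicity of PMASes, it suffices to prove $x_{\delta(u),i}=1$ and $x_{\delta(v),i}=1$: once these hold, every $S$ with $\delta(u)\subseteq S$ (resp.\ $\delta(v)\subseteq S$) satisfies $1=x_{\delta(u),i}\le x_{S,i}\le 1$. By the symmetry between $u$ and $v$, I would only argue for $\delta(u)$.

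Second, I would compute the relevant game values. Since $i$ is of type I, no edge of $\delta(u)\cup\delta(v)$ other than $i$ is pendant, so neither $u$, $v$, nor any neighbor of $u$ or $v$ is pendant. Consequently $V\langle\delta(u)\rangle=\{u\}$ and $V\langle\delta(v)\rangle=\{v\}$, giving $\gamma(\delta(u))=\gamma(\delta(v))=1$. The main computation is $\gamma(\delta(u)\cup\delta(v))$: here $V\langle\delta(u)\cup\delta(v)\rangle=\{u,v\}\cup N(u,v)$, inducing the edge $uv$ together with, for each $w\in N(u,v)$, a common neighbor joined to both $u$ and $v$, so that $\gamma(\delta(u)\cup\delta(v))=\max\{1,\lvert N(u,v)\rvert\}$.

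Third, mimicking the derivation in the proof of Lemma \ref{thm:necessity-1} but splitting $\delta(u)\cup\delta(v)$ as $\delta(v)\sqcup(\delta(u)\setminus\{i\})$, efficiency together with monotonicity would yield
\[
\gamma(\delta(u)\cup\delta(v))+x_{\delta(u),i}\ \ge\ \gamma(\delta(u))+\gamma(\delta(v))=2 .
\]
If $\gamma(\delta(u)\cup\delta(v))=1$, this forces $x_{\delta(u),i}\ge 1$, and the upper bound of Lemma \ref{thm:edge-exposed} then closes the argument.

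The crux — and the one place where the assumption that $\Gamma_G$ admits a PMAS is genuinely used beyond the scheme at hand — is ruling out $\lvert N(u,v)\rvert\ge 1$, since a single common neighbor would push $\gamma(\delta(u)\cup\delta(v))$ to $2$ and render the displayed inequality vacuous. I expect to resolve this through Lemma \ref{thm:necessity-1}: any $w\in N(u,v)$ has $N(w)=\{u,v\}$, and every vertex within distance two of $w$ lies in $\{u,v\}\cup N(u)\cup N(v)$, none of which is pendant because $i$ is of type I. Hence $w$ would sit at distance at least three from every pendant vertex, contradicting Lemma \ref{thm:necessity-1}. Therefore $N(u,v)=\emptyset$, so $\gamma(\delta(u)\cup\delta(v))=1$, and the inequality delivers $x_{\delta(u),i}=1$; the argument for $\delta(v)$ is identical.
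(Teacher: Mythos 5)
Your proposal is correct and follows essentially the same route as the paper's proof: both use Lemma \ref{thm:necessity-1} to conclude $N(u,v)=\emptyset$ (hence $\gamma(\delta(u)\cup\delta(v))=1$), then apply the inequality $\gamma(\delta(u)\cup\delta(v))+x_{\delta(u),i}\geq\gamma(\delta(u))+\gamma(\delta(v))$ obtained from efficiency and monotonicity, and finally extend to all $S$ by monotonicity. Your write-up merely makes explicit two steps the paper leaves implicit — the splitting $\delta(v)\sqcup(\delta(u)\setminus\{i\})$ behind the key inequality and the upper bound $x_{S,i}\leq 1$ from Lemma \ref{thm:edge-exposed}.
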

\begin{proof}
By Lemma \ref{thm:necessity-1}, every vertex has distance at most two to pendant vertices  in $G$ .
Since $i$ is a non-pendant edge of type I, both endpoints $u$ and $v$ have distance two to pendant vertices.
It follows that $N(u,v)=\emptyset$ as every vertex in $N(u,v)$ has distance larger than two to pendant vertices.
Hence $\gamma \big(\delta (u)\cup \delta (v)\big)=1$.
Further notice that $ \gamma \big(\delta (u)\big) = \gamma \big(\delta (v)\big) =1$.
By definition of PMASes, we have
\begin{equation*}
\gamma \big(\delta (u)\cup \delta (v)\big) + x_{\delta (u),i}\geq \gamma \big(\delta (u)\big) + \gamma \big(\delta (v)\big).
\end{equation*}
It follows that $x_{\delta (u),i}=1$.
Similarly, we have $x_{\delta (v),i}=1$.
By monotonicity of PMASes, $x_{S,i}=1$ for any $S\subseteq N$ with $\delta (u)\subseteq S$ or $\delta (v)\subseteq S$.
\end{proof}

\begin{lemma}
\label{thm:NonPendantEdgeII}
Let $\{\boldsymbol{x}_S\}_{S\in 2^N\backslash \{\emptyset\}}$ be a PMAS of $\Gamma_G$.
If $i$ is a non-pendant edge of type II that is incident to a non-pendant edge of type I,
then $x_{S,i}=0$ for any $S\subseteq N$ with $i\in S$.
\end{lemma}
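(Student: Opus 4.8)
The plan is to reduce everything to the single quantity $x_{N,i}$: by monotonicity of the PMAS it suffices to prove $x_{N,i}=0$, since then $0\le x_{S,i}\le x_{N,i}=0$ for every $S\ni i$ by Lemma~\ref{thm:edge-exposed}. To set up the geometry, write $i=uv$. Since $i$ is of type~II it is incident to a pendant edge, and since $i$ is also incident to a type~I edge $j$, the two endpoints of $i$ must play different roles: if $j$ shared the same endpoint of $i$ as the pendant edge, then $j$ would itself be incident to a pendant edge and hence be of type~II, a contradiction. Thus I may label the endpoints so that $u$ carries a pendant edge $uw$ (with $w$ a pendant vertex) while the type~I edge $j=vz$ is incident to the other endpoint $v$; this same observation forces $v$ to carry no pendant edge.

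The engine of the argument is one well-chosen maximum independent set together with the two structural lemmas already proved. First I would fix a maximum independent set $I$ of $G$ with $w\in I$; such an $I$ exists because a pendant vertex can always be swapped into a maximum independent set (exactly as in the proof of Lemma~\ref{thm:edge-pendant}), and then $u\notin I$ because $uw$ is an edge. Now I split on whether $v\in I$.

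If $v\notin I$, then neither endpoint of $i$ lies in $I$, so $i$ is not covered by the maximum independent set $I$; Lemma~\ref{thm:edge-exposed} then immediately yields $x_{S,i}=0$ for every $S\ni i$, which is even stronger than needed. If instead $v\in I$, I invoke Lemma~\ref{thm:decomposition} at the vertex $v$ to obtain $\sum_{e\in\delta(v)}x_{N,e}=\gamma\big(\delta(v)\big)=1$, and I invoke Lemma~\ref{thm:NonPendantEdgeI} on the type~I edge $j=vz$ (with $\delta(v)\subseteq N$) to obtain $x_{N,j}=1$. Since $i$ and $j$ are distinct edges of $\delta(v)$ and every $x_{N,e}$ is non-negative, these two facts force $x_{N,i}\le 1-x_{N,j}=0$, hence $x_{N,i}=0$; monotonicity then propagates this to all $S\ni i$.

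I expect the only delicate point to be the structural bookkeeping at the start, namely verifying that the type~I neighbour $j$ must attach at the endpoint of $i$ opposite to the pendant edge, and hence that $v$ is free of pendant edges, which is exactly what makes $\gamma\big(\delta(v)\big)=1$ and legitimizes the use of Lemma~\ref{thm:NonPendantEdgeI} at $v$. Once the roles of $u$ and $v$ are pinned down, the case split on ``$v\in I$ versus $v\notin I$'' is clean: the exposed case is handled verbatim by Lemma~\ref{thm:edge-exposed}, and the covered case is a one-line squeeze combining the decomposition identity with the unit payoff that the type~I edge necessarily extracts at $v$.
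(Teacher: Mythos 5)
Your proposal is correct and follows essentially the same route as the paper's proof: the same structural observation that the type~I edge must attach at the endpoint of $i$ opposite the pendant edge, the same choice of a maximum independent set forced to avoid $u$ (the paper maximizes pendant vertices, you pin down the pendant neighbour $w$), and the same squeeze combining Lemma~\ref{thm:decomposition} with Lemma~\ref{thm:NonPendantEdgeI}, followed by monotonicity. The only cosmetic difference is that the paper disposes of the uncovered case up front via Lemma~\ref{thm:edge-exposed} as a without-loss-of-generality assumption, whereas you handle it inside your case split on whether $v\in I$.
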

\begin{proof}
Without loss of generality, we may assume that edge $i$ is covered by every maximum independent set of $G$, since otherwise the assertion follows from Lemma \ref{thm:edge-exposed} directly.
Let $u$ and $v$ be the endpoints of edge $i$.
Further assume that $u$ has distance one and $v$ has distance two to pendant vertices.
Let $I$ be a maximum independent set of $G$ that contains as many pendant vertices as possible.
It follows that $u\not\in I$ as $u$ is adjacent to a pendant vertex in $G$.
Since edge $i$ is covered by $I$, $u\not \in I$ implies $v\in I$.
Let $i^*$ be a non-pendant edge of type I that is incident to $i$.
Clearly, $i^*\in \delta (v)$.
Lemma \ref{thm:NonPendantEdgeI} implies $x_{N,i^*}=1$.
Then $x_{N,i}=0$ follows from Lemma \ref{thm:decomposition}.
By monotonicity of PMASes, $x_{S,i}=0$ for any $S\subseteq N$ with $i\in S$.
\end{proof}

With Lemmas \ref{thm:NonPendantEdgeI} and \ref{thm:NonPendantEdgeII}, we are able to deduce the value in PMASes for special non-pendant edges.
Another minimal counterexample for the converse of Lemma \ref{thm:necessity-1} was detected with those observations for non-pendant edges.
Moreover, the counterexample suggests another necessary condition for the population monotonicity of independent set games.

\begin{figure}[h]
\centering
\includegraphics[width=.4\textwidth]{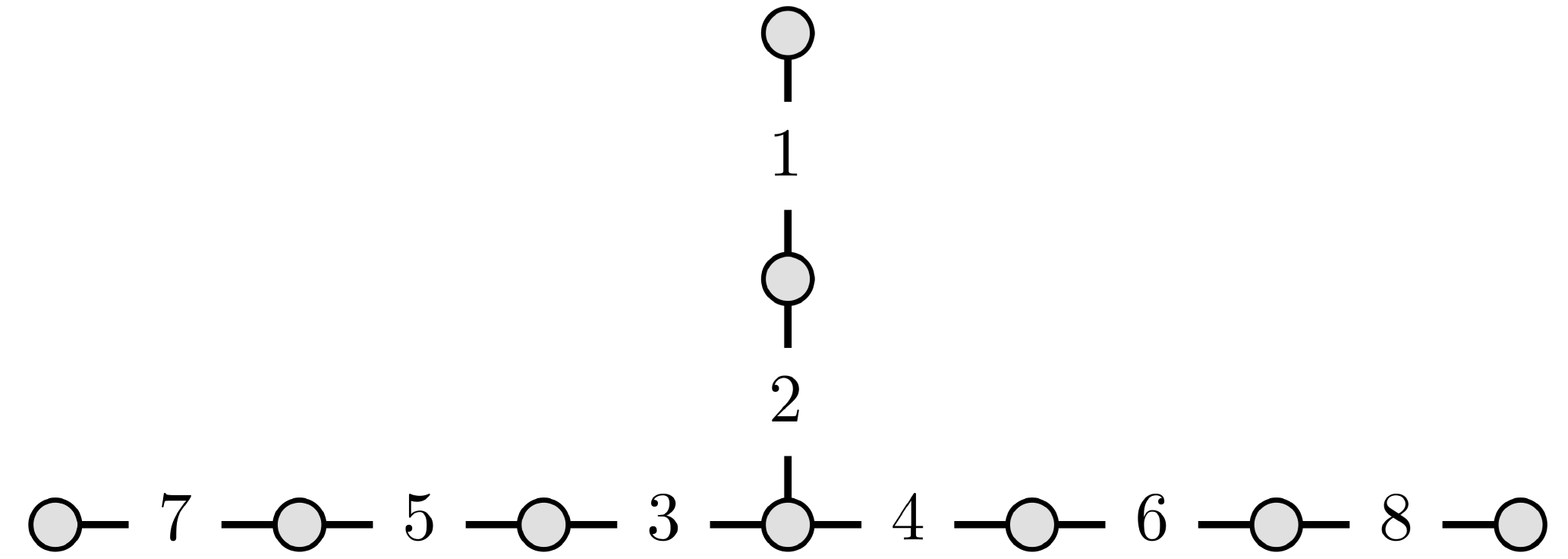}
\caption{A minimal example suggesting Lemma \ref{thm:necessity-2}.}
\label{fig:CounterExample2}
\end{figure}

\begin{example}
\label{ex:2}
Let $\Gamma=(N,\gamma)$ be the independent set game defined on the graph in Figure \ref{fig:CounterExample2}.
Clearly, every vertex in the underlying graph has distance at most two to a pendant vertex.
However, $\Gamma$ is not population monotonic.
Assume to the contrary that $\{\boldsymbol{x}_S\}_{S\in 2^{N} \backslash \{\emptyset\}}$ is a PMAS of $\Gamma$.
Notice that edges $3$ and $4$ are non-pendant edges of type I, and edge $2$ is a non-pendant edge of type II.
Lemma \ref{thm:NonPendantEdgeI} implies that $x_{234,3}=x_{234,4}=1$ and Lemma \ref{thm:NonPendantEdgeII} implies that $x_{234,2}=0$.
It follows that
\begin{equation*}
2=x_{234,2}+x_{234,3}+x_{234,4}>\gamma (234)=1,
\end{equation*}
which contradicts efficiency of PMASes.
$\lhd$
\end{example}

Based on Example \ref{ex:2}, we introduce another structural characterization for population monotonic independent set games.

\begin{lemma}
\label{thm:necessity-2}
If $\Gamma_G$ is population monotonic, then non-pendant edges of type I are not incident in $G$.
\end{lemma}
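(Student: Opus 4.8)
The plan is to argue by contradiction in the spirit of Example \ref{ex:2}, but to streamline the argument by working with the single coalition $\delta(v)$ rather than an ad hoc coalition. Suppose $\Gamma_G$ is population monotonic yet two non-pendant edges of type I, say $e$ and $f$, are incident, sharing a common endpoint $v$; write $e=uv$ and $f=vw$. Fix any PMAS $\{\boldsymbol{x}_S\}_{S\in 2^N\backslash\{\emptyset\}}$. Both $e$ and $f$ belong to $\delta(v)$, and each is a non-pendant edge of type I having $v$ as one endpoint, so Lemma \ref{thm:NonPendantEdgeI} applied with $S=\delta(v)$ (which trivially satisfies $\delta(v)\subseteq S$) yields $x_{\delta(v),e}=x_{\delta(v),f}=1$.

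The second ingredient is to identify the value $\gamma\big(\delta(v)\big)$. The key observation is that a vertex incident to a type I edge has no pendant neighbor: if $v$ had a pendant neighbor $p$, then the pendant edge $vp$ would lie in $\delta(v)$ and be incident to $e$, contradicting that $e$ is of type I. Consequently every vertex in $V\langle\delta(v)\rangle$ other than $v$ would have to be a pendant neighbor of $v$, so $V\langle\delta(v)\rangle=\{v\}$, the induced subgraph $G[V\langle\delta(v)\rangle]$ is a single vertex, and hence $\gamma\big(\delta(v)\big)=\alpha\big(G[\{v\}]\big)=1$.

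Combining the two steps, efficiency of the PMAS on the coalition $\delta(v)$ forces
\[
2=x_{\delta(v),e}+x_{\delta(v),f}\leq \sum_{i\in\delta(v)}x_{\delta(v),i}=\gamma\big(\delta(v)\big)=1,
\]
where the inequality holds because all payoffs are non-negative and $e,f\in\delta(v)$. This is the desired contradiction, so no two non-pendant edges of type I can be incident in $G$.

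The step I expect to require the most care is the second one: making precise that a type I edge forces the absence of pendant neighbors at its endpoints, so that the efficiency bound on $\delta(v)$ is exactly $1$ rather than something larger. Once $\gamma\big(\delta(v)\big)=1$ is pinned down, the contradiction is immediate from Lemma \ref{thm:NonPendantEdgeI} together with efficiency and non-negativity; the remaining work is purely bookkeeping and introduces no new machinery beyond the already-established lemmas.
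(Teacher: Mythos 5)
Your proof is correct and takes essentially the same approach as the paper: apply Lemma \ref{thm:NonPendantEdgeI} with $S=\delta(v)$ to both incident type I edges and contradict efficiency of the PMAS on the coalition $\delta(v)$, where $\gamma\big(\delta(v)\big)=1$. The only cosmetic difference is that you justify $\gamma\big(\delta(v)\big)=1$ directly from the type I property (no pendant neighbors of $v$, hence $V\langle\delta(v)\rangle=\{v\}$), whereas the paper invokes Lemma \ref{thm:necessity-1}; your version is if anything slightly more self-contained.
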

\begin{proof}
Assume to the contrary that $i_1$ and $i_2$ are non-pendant edges of type I incident to the same vertex $v$ in $G$.
By Lemma \ref{thm:necessity-1}, $v$ has distance two to pendant vertices in $G$, suggesting $\gamma \big(\delta (v)\big)=1$.
Lemma \ref{thm:NonPendantEdgeI} implies that $x_{\delta (v), i_1}=x_{\delta (v), i_2}=1$.
It follows that
\begin{equation*}
\sum_{i\in \delta (v)}x_{\delta (v), i}\geq x_{\delta (v),i_1}+x_{\delta(v),i_2}>\gamma \big(\delta (v)\big)=1,
\end{equation*}
which contradicts efficiency of PMASes.
\end{proof}

\subsection{A necessary and sufficient characterization}

Both Lemmas \ref{thm:necessity-1} and \ref{thm:necessity-2} are necessary conditions for the population monotonicity of independent set games.
We show that combining these two conditions yields a sufficient condition for the population monotonicity of independent set games.

\begin{lemma}\label{thm:sufficiency}
If $G$ satisfies the following two conditions:
\begin{enumerate}[label={\emph{($\roman*$)}}]
	\item\label{itm:necessity-1} every vertex has  distance at most two to pendant vertices;
	\item\label{itm:necessity-2} non-pendant edges of type I are not incident,
\end{enumerate}
then $\Gamma_G$ is population monotonic.
\end{lemma}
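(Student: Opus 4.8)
The plan is to exhibit an explicit PMAS whose entries depend on the coalition $S$ only through the monotone predicates $\delta(v)\subseteq S$ (equivalently $v\in V\langle S\rangle$, which I will call ``$v$ is \emph{active} in $S$''). Because $\delta(v)\subseteq S\subseteq T$ forces $\delta(v)\subseteq T$, any scheme assembled from such predicates is monotone for free, so the entire burden of the proof shifts onto verifying efficiency coalition by coalition.

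First I would extract the structure forced by conditions \ref{itm:necessity-1} and \ref{itm:necessity-2}. I partition $V$ into the pendant vertices $P$, the support vertices $Q$ adjacent to pendant vertices, and the remaining vertices $R$; condition \ref{itm:necessity-1} gives $V=P\cup Q\cup R$ with every vertex of $R$ at distance exactly two from a pendant vertex. A direct check then shows that pendant edges join $P$ to $Q$, type I edges live inside $R$, and every type II edge meets $Q$; condition \ref{itm:necessity-2} makes the type I edges a matching $M$ of $R$, so each $R$-vertex is either unmatched or the unique endpoint of one type I edge. With this in hand I would prove the closed form $\gamma(S)=a_P(S)+a_R(S)-m(S)$, where $a_P(S)$ and $a_R(S)$ count the active pendant and active $R$-vertices and $m(S)$ counts the type I edges with both endpoints active. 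The key point is a pendant-domination argument on $G[V\langle S\rangle]$: an active support vertex always has an active pendant neighbour, so some maximum independent set of $G[V\langle S\rangle]$ discards $Q$ altogether, keeps every active pendant (which is isolated from the remainder), and takes a maximum independent set of the active part of $M$, of size $a_R(S)-m(S)$.

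Next I would define the scheme by edge class. Every pendant edge receives $1$; a type I edge $vw$ receives $1$ exactly when $v$ or $w$ is active and $0$ otherwise; every type II edge incident to a type I edge receives $0$, in agreement with Lemma \ref{thm:NonPendantEdgeII}; and a type II edge whose $R$-endpoint $v$ is unmatched carries a fixed share $\beta_{v,\cdot}$, with $\sum\beta_{v,\cdot}=1$ over the type II edges at $v$, switched on precisely when $v$ is active, while type II edges with both endpoints in $Q$ get $0$. Efficiency is then checked against the closed form one class at a time: active pendants account for their own pendant edges, each unmatched active $R$-vertex spends its unit through its $\beta$-weighted type II edges, and each matched pair $e$ spends a single unit on its type I edge whenever at least one of its endpoints is active, which is exactly the contribution $\chi_e-[\chi_e=2]$ of $e$ to $a_R(S)-m(S)$, where $\chi_e$ denotes its number of active endpoints.

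The main obstacle is the second step: establishing the closed form for $\gamma(S)$ uniformly over all coalitions, in particular the bookkeeping for the awkward configurations where a type I edge has just one active endpoint, or a support vertex is active without its full star present. Once that formula is secured the allocation is essentially forced by the earlier lemmas, monotonicity comes for free from the activation predicates, and the non-negativity together with the remaining efficiency identities reduces to the edge-class accounting sketched above.
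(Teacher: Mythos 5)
Your proposal is correct, and the scheme you construct is the paper's scheme \eqref{eq:mech} in disguise: under conditions (i) and (ii), an endpoint $u$ of a type II edge satisfies $\delta(u)\subseteq E_2$ precisely when $u$ is an unmatched $R$-vertex, so the paper's rule (value $1/\lvert\delta(u)\rvert$ when $\delta(u)\subseteq S\cap E_2$) is exactly your $\beta$-rule with uniform shares, and monotonicity is dispatched identically in both proofs via the monotone activation predicates. The genuine difference lies in the efficiency check. The paper never writes down a formula for $\gamma$: it fixes a maximum independent set $I_S$ of $G[V\langle S\rangle]$ containing as many pendant vertices of $G$ as possible, shows every non-pendant vertex of $I_S$ lies in your set $R$, and then accounts locally --- each star $\delta(v)$ with $v\in I_S$ absorbs exactly one unit, and each edge of $S$ not covered by $I_S$ receives $0$, the latter requiring casework that reuses the maximality and pendant-richness of $I_S$. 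You instead establish the global closed form $\gamma(S)=a_P(S)+a_R(S)-m(S)$ --- which does hold, by exactly the pendant-swap argument you sketch: an active support vertex always has an active pendant neighbour, so swapping supports for pendants yields a maximum independent set avoiding $Q$, and what remains is isolated active pendants plus the active part of the matching $M$ --- and then you match the scheme's total against this formula class by class, using $[v\ \text{active}]+[w\ \text{active}]-[\text{both active}]=[v\ \text{or}\ w\ \text{active}]$ for each matched pair. Both routes turn on the same key fact (a pendant-rich maximum independent set avoids support vertices), but yours buys a reusable closed-form description of the characteristic function, makes efficiency a mechanical identity rather than a case analysis, and shows that the uniform shares in \eqref{eq:mech} can be replaced by arbitrary non-negative shares summing to one at each unmatched $R$-vertex; the paper's local accounting, in exchange, avoids having to prove any global formula for $\gamma$. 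The step you flagged as the main obstacle goes through exactly as sketched, so there is no gap.
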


\begin{proof}
Assume that $G$ satisfies conditions \emph{$\ref{itm:necessity-1}$} and \emph{$\ref{itm:necessity-2}$}.
Let $E_0$, $E_1$ and $E_2$ denote the set of pendant edges, non-pendant edges of type I, and non-pendant edges of type II, respectively.
For any non-empty $S\subseteq N$, define $\boldsymbol{x}_S=\{x_{S,i}\}_{i\in S}$ by
\begin{equation}
\label{eq:mech}
x_{S,i} =
    \begin{cases*}
      	\hspace{3mm} 1 & if $i\in E_0$,  \\
      	\hspace{3mm} 1 & if $i\in E_1$ with $\delta (u)\subseteq S$ or $\delta (v)\subseteq S$,\\
	\frac{1}{\lvert \delta(u)\rvert} & if $i\in E_2$ with $\delta (u)\subseteq S\cap E_2$,\\
	\frac{1}{\lvert \delta(v)\rvert} & if $i\in E_2$ with $\delta (v)\subseteq S\cap E_2$,\\
	\hspace{3mm} 0 & otherwise,
    \end{cases*}
\end{equation}
where $u$ and $v$ are the endpoints of edge $i$.
Notice that $\delta (u)\subseteq S\cap E_2$ and $\delta (v)\subseteq S\cap E_2$ never occur simultaneously when $i\in E_2$.
Hence every $\boldsymbol{x}_S$ is well-defined.
We show that $\{\boldsymbol{x}_S\}_{S\in 2^N\backslash \{\emptyset\}}$ is a PMAS of $\Gamma_G$.

It is easy to verify the monotonicity condition.
Let $S\subseteq T\subseteq N$ and $i\in S$.
It is trivial that $x_{S,i}\leq x_{T,i}$ when $x_{S,i}=0$.
Hence assume that $x_{S,i}>0$.
By definition, either $i\in E_0$ or $i\in E_1$ implies $x_{S,i}=x_{T,i}=1$.
If $i\in E_2$, then $x_{S,i} = x_{T,i}=\frac{1}{\lvert  \delta (v)\rvert } $ follows from definition, where $v$ is an endpoint of edge $i$.
Hence $x_{S,i} = x_{T,i}$ when $x_{S,i}>0$.
Therefore, the monotonicity follows.

It remains to check the efficiency condition.
Let $S\subseteq N$ and $I_S$ be a maximum independent set of $G[V \langle S\rangle]$ that contains as many pendant vertices of $G$ as possible.
We claim that every non-pendant vertex from $I_S$ has distance two to pendant vertices in $G$.
Assume to the contrary that there exists a non-pendant vertex $v\in I_S$ adjacent to a pendant vertex $u$ in $G$.
Notice that $v \in I_S$ implies $u \in V\langle S\rangle \backslash I_S$.
Then $I_S -v+u$ is a maximum independent set of $G[V\langle S\rangle]$ with more pendant vertices than $I_S$, which contradicts the selection of $I_S$.
In the following, we show
\begin{equation}\label{eq:efficiency}
  \sum_{i\in S} x_{S,i}=\sum_{v\in I_S} \sum_{i\in \delta (v)} x_{S,i}=\lvert I_S\rvert =\gamma (S).
\end{equation}

We first prove $\sum_{i\in \delta (v)}x_{S,i}=1$ for any $v\in I_S$.
Let $v\in I_S$.
Clearly, $\delta (v)\subseteq S$.
It is trivial that $\sum_{i\in \delta (v)}x_{S,i}=1$ if $v$ is pendant in $G$.
Hence assume that $v$ is non-pendant in $G$.
It follows that $v$ has distance two to pendant vertices in $G$.
We show $\sum_{i\in \delta (v)}x_{S,i}=1$ by distinguishing two cases of $\delta (v)$.
If $\delta (v)\subseteq E_2$, then $x_{S,i}=\frac{1}{\lvert \delta (v)\rvert}$ for any $i\in \delta (v)$, implying that $\sum_{i\in \delta (v)}x_{S,i}=1$.
Otherwise, let $i^*\in \delta (v)$ be a non-pendant edge of type I.
By definition, $x_{S,i^*}=1$ and $x_{S,i}=0$ for any $i\in \delta(v)\backslash i^*$.
It follows that $\sum_{i\in \delta (v)}x_{S,i}=1$.

We now prove $x_{S,i}=0$ for any $i\in S$ not covered by $I_S$.
Let $i\in S$ be an edge not covered by $I_S$.
Let $u$ and $v$ be the endpoints of edge $i$.
Clearly, $u,v\not\in I_S$.
By definition, $x_{S,i}=0$ follows trivially if neither $\delta (u)\subseteq S$ nor $\delta (v) \subseteq S$.
Hence without loss of generality, assume that $\delta (v)\subseteq S$.
According to the selection of $I_S$, edge $i$ is non-pendant since every pendant edge in $S$ is covered by $I_S$.
We claim that $i\in E_2$, i.e., $i$ is a non-pendant edge of type II.
Assume to the contrary that $i\in E_1$.
Then both $u$ and $v$ have distance two to pendant vertices in $G$.
Moreover, $u$ is the unique neighbor of $v$ that has distance two to pendant vertices in $G$.
Since every non-pendant vertex from $I_S$ has distance two to pendant vertices in $G$,
$I_S+v$ is an independent set of $G[V\langle S\rangle]$, which contradicts the maximality of $I_S$.
Hence we have $i\in E_2$ as asserted.
Without loss of generality, assume that $v$ has distance two to pendant vertices in $G$.
It is trivial that $\delta (u)\not\subseteq E_2$, as $u$ is adjacent to pendant vertices in $G$.
To prove $x_{S,i}=0$, it suffices to show that $\delta (v)\not\subseteq E_2$.
Assume to the contrary that $\delta (v)\subseteq E_2$.
Then every neighbor of $v$ is adjacent to pendant vertices in $G$.
According to the selection of $I_S$, $I_S +v$ is an independent set of $G[V\langle S\rangle]$, which contradicts the maximality of $I_S$.
Hence we have $\delta (v)\not\subseteq E_2$.
By definition, $x_{S,i}=0$.

Therefore, the efficiency follows from \eqref{eq:efficiency}.
\end{proof}

\begin{figure}[h]
\centering
\includegraphics[width=.4\textwidth]{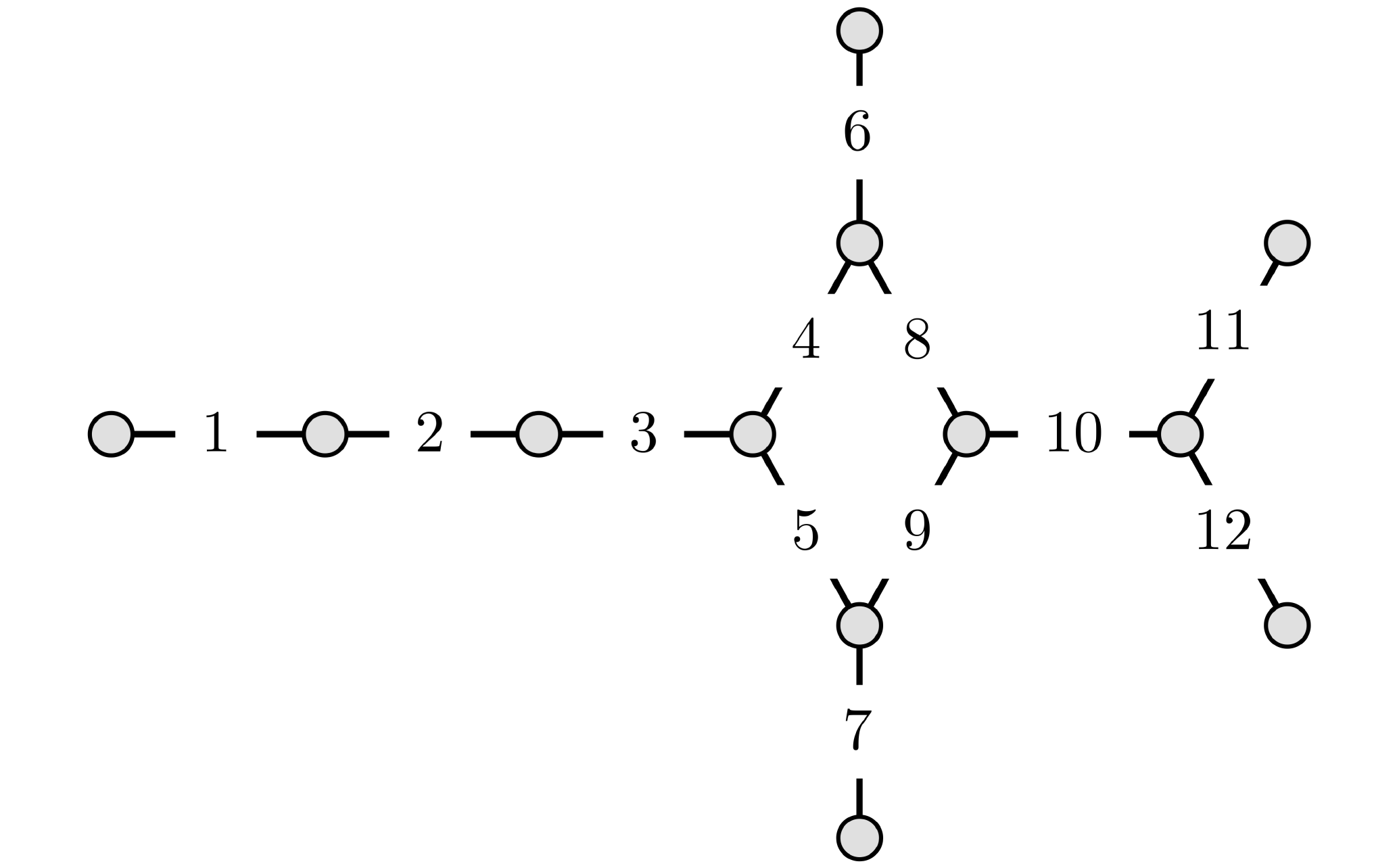}
\caption{An illustrative example.}
\label{fig:example}
\end{figure}

\begin{example}
\label{ex:3}
Let $\Gamma=(N,\gamma)$ be the independent set game defined on the graph in Figure \ref{fig:example}.
It is easy to verify that the underlying graph satisfies both conditions of Lemma \ref{thm:sufficiency}.
Let $E_0$, $E_1$, $E_2$ denote the set of pendant edges, non-pendant edges of type I and non-pendant edges of type II, respectively.
Note that $E_0=\{1,6,7,11,12\}$, $E_1=\{3\}$, $E_2=\{2,4,5,8,9,10\}$.
For any $S\subseteq N$, define $\boldsymbol{x}_S=(x_{S,i})_{i\in S}$ according to \eqref{eq:mech}, i.e.,
$x_{S,i}=1$ if $i\in S\cap \{1,6,7,11,12\}$;
$x_{S,3}=1$ if $\{2,3\}\subseteq S$ or $\{3,4,5\}\subseteq S$;
$x_{S,8}=x_{S,9}=x_{S,10}=\frac{1}{3}$ if $\{8,9,10\}\subseteq S$;
and $x_{S,i}=0$ otherwise.
We may verify that $\{\boldsymbol{x}_S\}_{S\in 2^N\backslash \{\emptyset\}}$ defined above is indeed a PMAS of $\Gamma$.
$\lhd$
\end{example}

Finally, we come to our main result.
Combining Lemmas \ref{thm:necessity-1}, \ref{thm:necessity-2} and \ref{thm:sufficiency} gives a necessary and sufficient characterization for the population monotonicity of independent set games.

\begin{theorem}
\label{thm:main}
An independent set game is population monotonic if and only if the underlying graph satisfies the following two conditions:
\begin{enumerate}[label={\emph{($\roman*$)}}]
	\item every vertex has distance at most two to pendant vertices;
	\item non-pendant edges of type I are not incident.
\end{enumerate}
\end{theorem}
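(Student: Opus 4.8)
The plan is to prove the equivalence by splitting it into its two implications, observing that the entire substance of the argument has already been packaged into the three preceding lemmas. The characterization asserts that $\Gamma_G$ is population monotonic if and only if the underlying graph $G$ simultaneously satisfies conditions (i) and (ii), so it suffices to verify each direction in turn, and in each case the corresponding lemma applies essentially verbatim.

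For the necessity direction, I would assume that $\Gamma_G$ is population monotonic and derive both structural conditions. This is immediate: Lemma \ref{thm:necessity-1} takes population monotonicity as its hypothesis and yields condition (i), that every vertex has distance at most two to a pendant vertex, while Lemma \ref{thm:necessity-2} takes the same hypothesis and yields condition (ii), that non-pendant edges of type I are not incident. Together they give that population monotonicity forces the conjunction (i) $\wedge$ (ii). For the sufficiency direction, I would assume that $G$ satisfies both conditions and invoke Lemma \ref{thm:sufficiency} directly, which already states precisely that (i) $\wedge$ (ii) implies population monotonicity of $\Gamma_G$. Thus the theorem follows by combining the three lemmas with no additional computation.

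The main obstacle is therefore not located in the assembly of the theorem itself, which is a one-line combination, but rather entirely within the sufficiency lemma it rests upon. The delicate part there is confirming that the piecewise allocation scheme \eqref{eq:mech} meets the efficiency requirement for every coalition $S$: one selects a maximum independent set $I_S$ of $G[V\langle S\rangle]$ containing as many pendant vertices of $G$ as possible, then shows both that each $v \in I_S$ accrues total payoff exactly $1$ over $\delta(v)$ and that every edge of $S$ left uncovered by $I_S$ receives payoff zero. Condition (ii) is exactly what excludes the obstruction exhibited in Example \ref{ex:2}, where two incident type-I edges would each be forced to payoff $1$ and thereby overshoot the coalition value; once that configuration is forbidden, the scheme balances correctly and the efficiency identity \eqref{eq:efficiency} closes the argument.
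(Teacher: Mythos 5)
Your proposal is correct and matches the paper exactly: the paper proves Theorem \ref{thm:main} precisely by combining Lemmas \ref{thm:necessity-1} and \ref{thm:necessity-2} for necessity with Lemma \ref{thm:sufficiency} for sufficiency, with no additional argument. Your observation that the real work lives inside the sufficiency lemma's efficiency verification is also an accurate reading of where the paper's effort is concentrated.
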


Both conditions of Theorem \ref{thm:main} can be verified efficiently.
The first condition can be verified with the BFS algorithm.
Indeed, a vertex has distance at most two to pendant vertices if and only if there is a pendant vertex in the first two levels of the BFS tree rooted at the vertex.
Moreover, the class of a non-pendant edge is determined after BFS searching both endpoints of the edge.
Hence the second condition can be verified easily after checking the first condition for every vertex with the BFS algorithm.
Therefore, we have the following corollary.

\begin{corollary}
The population monotonicity of an independent set game can be determined efficiently.
\end{corollary}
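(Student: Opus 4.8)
The plan is to reduce the decision problem to verifying the two structural conditions of Theorem~\ref{thm:main}, each of which I expect to certify in polynomial (indeed near-linear) time using breadth-first search. By Theorem~\ref{thm:main}, $\Gamma_G$ is population monotonic if and only if $G$ satisfies conditions $(i)$ and $(ii)$, so it suffices to exhibit efficient tests for these two conditions and run them in succession.

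For condition $(i)$, I would first identify the set $P$ of pendant vertices by a single scan of the degree sequence, which takes $O(|V|+|E|)$ time. Then I would launch a multi-source BFS from all vertices of $P$ simultaneously, halting once level two is reached; a vertex has distance at most two to a pendant vertex precisely when it is discovered within the first two levels. Thus condition $(i)$ holds if and only if every vertex is reached, and one pass suffices. This is the multi-source refinement of the per-vertex BFS described in the text before the corollary, and it avoids an extra factor of $|V|$.

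For condition $(ii)$, I would first classify every edge. An edge is pendant iff one of its endpoints lies in $P$; a non-pendant edge is of type~II iff it is incident to a pendant edge, equivalently iff one of its endpoints is adjacent to a vertex in $P$, and of type~I otherwise. The predicate ``adjacent to some pendant vertex'' can be read off the BFS data (or computed by a separate $O(|V|+|E|)$ scan over $P$ and its neighborhoods), after which the type of each edge is determined in constant time. Condition $(ii)$ asserts that no two type~I edges share an endpoint; this is checked by scanning the type~I edges and maintaining, for each vertex, a count of its incident type~I edges, reporting failure as soon as any count reaches two.

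Since every step runs in time polynomial, in fact linear, in $|V|+|E|$, the conjunction of the two tests decides population monotonicity efficiently. I do not anticipate a genuine obstacle: the substance of the corollary is carried entirely by Theorem~\ref{thm:main}, and what remains is the routine observation that both combinatorial conditions are sufficiently local to be certified by a constant number of BFS passes. The only point meriting care is the bookkeeping in the edge classification, namely translating ``incident to a pendant edge'' into the equivalent adjacency condition ``an endpoint is adjacent to a pendant vertex''; this equivalence is immediate from the definitions but should be stated explicitly to keep the argument airtight.
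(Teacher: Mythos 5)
Your proposal is correct and takes essentially the same approach as the paper: both reduce the decision problem to checking conditions ($i$) and ($ii$) of Theorem \ref{thm:main} and verify them with BFS plus a local classification of the non-pendant edges. Your multi-source BFS and the explicit reformulation of ``incident to a pendant edge'' as ``an endpoint is adjacent to a pendant vertex'' merely sharpen the paper's per-vertex BFS to a linear-time implementation; the substance is identical.
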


\section{Discussion}
\label{sec:discussion}

It is well-known that convex games are population monotonic \cite{Spru90}, but the converse may not be true.
Xiao et al. \cite{XWF21} show that independent set games are convex if and only if every non-pendant edge is incident to a pendant edge in the underlying graph.
In fact, the condition of convexity given by Xiao et al. \cite{XWF21} is a special case of the first condition of population monotonicity in Theorem \ref{thm:main}.
Indeed, pendant edges/vertices are necessary for both convexity and population monotonicity.
And a graph meeting the condition of convexity clearly satisfies the first condition of population monotonicity.
However, population monotonicity allows non-pendant edges not to be incident to pendant edges provided that both endpoints of every such non-pendant edge have distance at most two to pendant vertices.

One possible direction for future work is to characterize the PMASes of other combinatorial optimization games such as flow games and linear production games.
Besides, related concepts of PMASes, such as link monotonic allocation schemes \cite{Sli05} and monotonic core allocation paths \cite{Al19} are also worth studying.

\section*{Acknowledgement}
We would like to thank the reviewers for their valuable comments and suggestions.
Han Xiao is supported in part by the National Natural Science Foundation of China (Nos.\,12001507, 11871442 and 12171444) and the Natural Science Foundation of Shandong (No.\,ZR2020QA024).
Donglei Du is supported in part by the Natural Sciences and Engineering Research Council of Canada (No.\,283106) and the National Natural Science Foundation of China (Nos.\,11771386 and 11728104).
Dachuan Xu is supported in part by the National Natural Science Foundation of China (No.\,12131003) and the Natural Science Foundation of Beijing (No.\,Z200002).

\bibliographystyle{abbrv}
\bibliography{reference}
\end{document}